\documentclass[10pt, draftcls, onecolumn]{IEEEtran}
\usepackage[dvips]{graphicx}
\usepackage{subfigure}
\usepackage{epsfig, psfrag, amsmath, amssymb, amsfonts, latexsym, eucal, balance, indentfirst}
\usepackage{bbm}

\vfuzz2pt 
\hfuzz2pt 
\newtheorem{thm}{Theorem}

\makeatletter

\newcommand{\Rmnum}[1]{\expandafter\@slowromancap\romannumeral #1@}
\makeatother

\oddsidemargin 0in \evensidemargin 0in \topmargin -1.35in
\textheight 10.1in \textwidth 6.75in

\graphicspath{{figs/}}

\begin{document}

\title{Stochastic Analysis of Mean Interference for RTS/CTS Mechanism}
\author{Yi Zhong and Wenyi Zhang\\
Department of Electronic Engineering and Information Science\\
University of Science and Technology of China\\
Hefei 230027, China. Email: {\tt wenyizha@ustc.edu.cn}
}
\maketitle

\begin{abstract}
The RTS/CTS handshake mechanism in WLAN is studied using stochastic geometry. The effect of RTS/CTS is treated as a thinning procedure for a spatially distributed point process that models the potential transceivers in a WLAN, and the resulting concurrent transmission processes are described. Exact formulas for the intensity of the concurrent transmission processes and the mean interference experienced by a typical receiver are established. The analysis yields useful results for understanding how the design parameters of RTS/CTS affect the network interference.
\end{abstract}

\section{Introduction}
\label{sec:intro}
The universal deployment of WLAN during the recent years has been a major driving force for the proliferation of personal wireless network access. In IEEE 802.11 MAC \cite{ieee1997wireless} which is the most common WLAN medium access protocol, both physical carrier sensing and virtual carrier sensing are introduced to avoid the collision in communication. In physical carrier sensing, a potential transmitter first detects whether there are any active transmitters within a detectable region and defers its transmission if so. In virtual carrier sensing, a Request-to-Send/Clear-to-Send (RTS/CTS) handshake mechanism is introduced in order to solve the hidden terminal problem \cite{tobagi1975packet}. In wireless networks, a hidden node is a node which is visible from a receiver, but not from the transmitter communicating with the said receiver. To avoid the collision caused by hidden nodes, the RTS/CTS handshake mechanism sets up a protection zone around a receiver, and thus the overall protection region of a transceiver pair is the union of the carrier sensing cleared region and the RTS/CTS cleared region; see Figure \ref{fig:Pair}.
\begin{figure}[ht]
\centering
\includegraphics[width=0.45\textwidth]{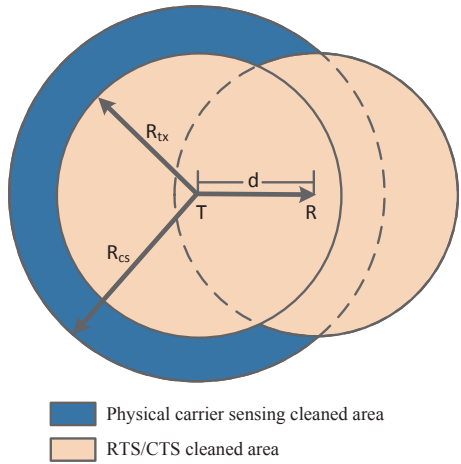}
\caption{Illustration of a pair of transmitter and receiver with the RTS/CTS mechanism.}
\label{fig:Pair}
\end{figure}

Despite of the existing significant body of literature for the performance analysis of 802.11 WLAN (see, e.g., \cite{weinmiller1995analyzing} \cite{cali1998ieee} \cite{bianchi2000performance} \cite{xu2003effectiveness}), the analysis of the RTS/CTS mechanism for a network is still elusive due to the randomness of spatially distributed nodes and the strong correlation among them. Since the purpose of RTS/CTS is to mitigate the network interference, it is important to understand how its design parameters such as the carrier sensing ranges affect the network interference. For applications, a system engineer may be interested to know, under which conditions the network interference is comparable with the thermal noise and thus essentially negligible.

In this paper, we develop a spatial distribution model for nodes under the RTS/CTS mechanism, and evaluate the mean interference that a typical receiver experiences in such a network. Traditional use of the homogeneous Poisson point process (PPP) \cite{haenggi2009stochastic} is not accurate to describe the spatial distribution of transmitters, since with the effect of RTS/CTS there is a strong coupling among neighbor transmitters. Additionally, former interference analysis mainly focuses on the interference at either a typical transmitter or an arbitrary and fixed location, instead of a typical receiver. In our work, we propose a marked point process to characterize the distribution of both transmitters and their corresponding receivers under the RTS/CTS mechanism. In the proposed marked point process, each point indicates the location of a transmitter with a mark denoting the relative location of the associated receiver, and the RTS/CTS mechanism takes effect in a fashion similar to Mat\'{e}rn's hard core models \cite{mat¨¦rn1986spatial}. By employing the model, we follow a general method to derive an exact formula for the mean interference experienced by a typical receiver. A key ingredient in our derivation is to relate the reduced Palm expectation with the second-order factorial measure and the second-order product density of the marked point process, which eventually yield convenient geometric interpretation. With the established results, we reveal how the performance of the RTS/CTS mechanism varies with the density of network nodes and the carrier sensing ranges.

In related works such as \cite{nguyen2007stochastic} and \cite{haenggi2011mean} the analysis is mainly based on Mat\'{e}rn's hard core model, which is not capable of precisely describing the behavior of the CTS/RTS mechanism. Furthermore, in our work the analysis yields exact results for the mean interference, instead of upper/lower bounds or approximations usually adopted in the literature.

The remaining part of this paper is organized as follows. Section \ref{sec:model} describes the marked point process models for nodes under RTS/CTS. Section \ref{sec:analysis} then establishes the main analytical results of this paper, including formulas of the intensity of the marked point process and the mean interference experienced by a typical receiver. Numerical illustration and simulation verification of the analysis will be presented in the forthcoming full version of paper.

\section{Network Model}
\label{sec:model}

In this section, we introduce the statistical network model for pairs of transmitters and receivers, under the RTS/CTS mechanism.

\subsection{Basic Assumptions}
\label{subsec:basic}

We consider a geographic area in which a large number of potential transmitters and their corresponding receivers reside. All the considered transmitters would send their signals in a common frequency band; that is, we base our model on only one typical subchannel in a 802.11 network system. A common assumption is that the subchannels are mutually independently allocated, in which case the analysis of this paper can be directly extended to multi-channel systems.

For simplicity, we set the transmit power of each potential transmitter to a common value $P_t$. Regarding signal propagation, we consider a general deterministic path-loss function $l(\cdot)$, so that the power received at a point of distance $r$ to the transmitter, denoted by $P_r$, is given by $P_r = l(r) P_t$. For example, a simple model commonly used in analysis is $l(r) = A r^{-\alpha}$ where $\alpha > 2$ is the so-called path-loss exponent. In this paper, statistical fading/shadowing and variable transmitting power are not incorporated into the model, and they may be treated by the approach presented herein under the condition that they are independent of the spatial process of nodes.

\subsection{Concurrent Transmission Processes}
\label{subsec:ctp}

To characterize the locations of transmitters and receivers, we begin with a Poisson bipolar model \cite{baccelli2009stochastic} where each point of a Poisson point process is a potential transmitter and has its receiver at some fixed distance $d$ and a random orientation of angle $\theta$. For simplicity, we only consider the case where $d$ is fixed, while the approach may be extended to the case where $d$ is a random variable. By applying the RTS/CTS mechanism, a pair of transceivers sets up an exclusion zone in which other potential transmitters are prohibited to transmit. We assume that the physical carrier sensing cleared region is a circular area centered at the transmitter with radius $R_\mathrm{cs}$, and that the virtual carrier sensing cleared region is the union of two circular areas, which are centered at the transmitter (i.e., RTS) and the receiver (i.e., CTS), respectively, with the same radius $R_\mathrm{tx} < R_\mathrm{cs}$. See Figure \ref{fig:Pair} for illustration.

We call the effect of RTS/CTS as RTS/CTS thinning, which selectively removes some of the points of the Poisson bipolar process of potential transceivers following the CTS/RTS rule. We consider two types of RTS/CTS thinning, similar to Mat\'{e}rn's hard core model. In type \Rmnum{1} thinning, a given transceiver pair is kept only when there is no other transmitter lying in the exclusion zone of the said transceiver pair.
In type \Rmnum{2} thinning, each transceiver pair is endowed with a random mark (as time stamp) and a given transceiver pair is kept only when there is no other transmitter with a smaller mark lying in the exclusion zone of the said transceiver pair. In view of the actual RTS/CTS mechanism, type \Rmnum{2} thinning better captures the reality, while type \Rmnum{1} thinning is overly conservative removing too many potential transceivers.

\subsection{Mathematical Description}
\label{subsec:math-model}

\subsubsection{Type \Rmnum{1} Concurrent Transmission Process}
We begin with $\widetilde{\Phi}_o=\{(X_i,\theta_i,e_i)\}$ which is assumed to be an independently marked Poisson point process (i.m. Poisson p.p.) with intensity $\lambda_p$ on $\mathbb{R}^2$, where
\begin{itemize}
\item
$\Phi_o=\{X_i\}$ denotes the locations of potential transmitters;
\item
$\theta_i$ denotes the orientation of the receiver for node $X_i$, uniformly distributed in $[0,2\pi]$.
Having assumed a constant distance of $d$ between a transmitter and its receiver, the orientation $\theta_i$ together with $X_i$ uniquely determines the location of the receiver.
\item
$e_i$ is the medium access indicator.
\end{itemize}

For $(X_i,\theta_i,e_i)\in\widetilde{\Phi}_o$, let
\begin{eqnarray}
\mathcal{N}(X_i,\theta_i,e_i)=\{(X_j,\theta_j,e_j)\in\widetilde{\Phi}_o: \nonumber\\
X_j\in B_{X_i}(R_{\mathrm{cs}})\cup B_{X_i+(d\cos\theta_i,d\sin\theta_i)}(R_{\mathrm{tx}}),j\neq i\}
\end{eqnarray}
be the set of neighbors of node $(X_i,\theta_i,e_i)$, where $B_X(r)$ denotes the circular area centered at $X$ with radius $r$.

The medium access indicator $e_i$ of node $(X_i,\theta_i,e_i)$ is a dependent mark defined as follows:
\begin{equation}
e_i=\mathbbm{1}\left(\sharp\mathcal{N}(X_i,\theta_i,e_i)=0\right),
\end{equation}
where $\sharp$ denotes the number of elements in its operand set.

Thus, the type \Rmnum{1} RTS/CTS thinning transforms $\widetilde{\Phi}_o$ into
\begin{equation}
\widetilde{\Phi}:=\{(X_i,\theta_i):(X_i,\theta_i,e_i)\in\widetilde{\Phi}_o,e_i=1\},
\end{equation}
which defines the set of transceivers retained by the thinning procedure. Finally, let $\Phi$ consist of the retained transmitters, as follows:
\begin{equation}
\Phi:=\{X_i:(X_i,\theta_i)\in\widetilde{\Phi}\}.
\end{equation}

\subsubsection{Type \Rmnum{2} Concurrent Transmission Process}
The main difference of type \Rmnum{2} process compared with type \Rmnum{1} process is that time is taken into consideration. Since transmission attempts happen asynchronously among nodes, the RTS/CTS mechanism functions in chronological order. We assume $\widetilde{\Phi}_o=\{(X_i,\theta_i,m_i,e_i)\}$ to be an i.m. Poisson p.p. with intensity $\lambda_p$ on $\mathbb{R}^2$, where
\begin{itemize}
\item
$\Phi_o=\{X_i\}$ denotes the locations of potential transmitters;
\item
$\theta_i$ denotes the orientation of the receiver for node $X_i$, uniformly distributed in $[0,2\pi]$;
\item
$\{m_i\}$ are i.i.d. time stamp marks uniformly distributed in $[0,1]$ (with appropriate normalization).
\item
$e_i$ is the medium access indicator.
\end{itemize}

For $(X_i,\theta_i,m_i,e_i)\in\widetilde{\Phi}_o$, let
\begin{eqnarray}
\mathcal{N}(X_i,\theta_i,m_i,e_i)=\{(X_j,\theta_j,m_j,e_j)\in\widetilde{\Phi}_o:\nonumber\\
X_j\in B_{X_i}(R_{\mathrm{cs}})\cup B_{X_i+(d\cos\theta_i,d\sin\theta_i)}(R_{\mathrm{tx}}),j\neq i\}
\end{eqnarray}
be the set of neighbors of node $(X_i,\theta_i,m_i,e_i)$.

The medium access indicator $e_i$ of node $(X_i,\theta_i,m_i,e_i)$ is a dependent mark defined as follows:
\begin{equation}
e_i=\mathbbm{1}\left(\forall {(X_j,\theta_j,m_j,e_j)\in\mathcal{N}(X_i,\theta_i,m_i,e_i)}, m_i<m_j\right).
\end{equation}

Thus, the type \Rmnum{2} RTS/CTS thinning transforms $\widetilde{\Phi}_o$ into
\begin{equation}
\widetilde{\Phi}:=\{(X_i,\theta_i):(X_i,\theta_i,m_i,e_i)\in\widetilde{\Phi}_o,e_i=1\}.
\end{equation}
Finally, let $\Phi$ be defined as follows:
\begin{equation}
\Phi:=\{X_i:(X_i,\theta_i)\in\widetilde{\Phi}\}.
\end{equation}

\section{Intensity of Transmitters and Mean Interference}
\label{sec:analysis}

In this section, we establish for both types of concurrent transmission processes the nodes intensity and the mean interference experienced by a typical receiver.

Before presenting our results, it is convenient to introduce the following quantities:
\begin{itemize}
\item $V_o$: the area of the exclusion zone of a transceiver pair, given by (see Figure \ref{fig:Pair})
\begin{equation}
V_o=(\pi-\gamma_1)R_{\mathrm{cs}}^2+(\pi-\gamma_2)R_{\mathrm{tx}}^2+dR_{\mathrm{cs}}\sin\gamma_1, \label{equ:V_o}
\end{equation}
where $\gamma_1=\arccos\Big(\frac{d^2+R_{\mathrm{cs}}^2-R_{\mathrm{tx}}^2}{2dR_{\mathrm{cs}}}\Big)$ and $\gamma_2=\arccos\Big(\frac{d^2+R_{\mathrm{tx}}^2-R_{\mathrm{cs}}^2}{2dR_{\mathrm{tx}}}\Big)$.

\item $S_1=\{(r,\varphi,\theta):r\leq R_{\mathrm{cs}}\}$, $S_2=\{(r,\varphi,\theta):r^2-2rd\cos\varphi+d^2\leq R_{\mathrm{tx}}^2\}$, $S_3=\{(r,\varphi,\theta):r^2+2rd\cos(\varphi-\theta)+d^2\leq R_{\mathrm{tx}}^2\}$. Consider two transmitters of distance $r$ apart and with phase angle difference $\varphi$, the orientation marks of which are $0$ and $\theta$, respectively; see Figure \ref{fig:twopair}.
    Then, $S_1$ denotes the event that the two transmitters are within the physical sensing cleared region of each other, $S_2$ (and $S_3$) denotes the event that one transmitter is within the RTS/CTS cleared region of the other.

\item $V(r,\varphi,\theta)$: the area of the union of two transceiver pairs, whose transmitters are distance $r$ apart, with relative phase difference $\varphi$, and whose receivers are at orientations $0$ and $\theta$ respectively; see Figure \ref{fig:twopair}.
\begin{figure}
\centering
\includegraphics[width=0.4\textwidth]{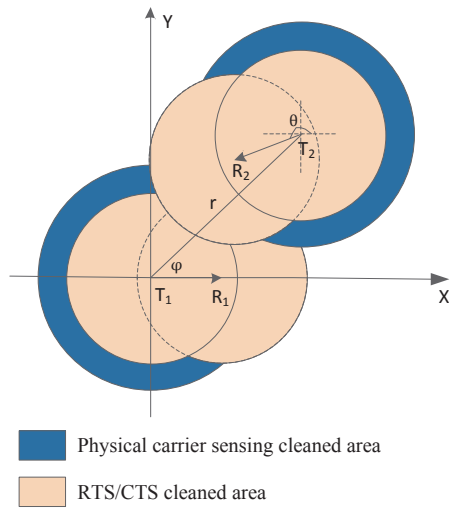}
\caption{Union of two transceiver pairs.}
\label{fig:twopair}
\end{figure}
\end{itemize}

\subsection{Type \Rmnum{1} Concurrent Transmission Process}
\subsubsection{Nodes Intensity}

To characterize the nodes intensity, we apply the notion of Palm probability. Consider a stationary point process $\Phi$ with finite non-zero intensity $\lambda$. Note that $\Phi$ is a random variable taking values in the measurable space $(\mathbb{M},\mathcal{M},P)$, where $\mathbb{M}$ is the set of all possible simple point process, $\mathcal{M}$ is a $\sigma-$algebra defined on $\mathbb{M}$ and $P$ is the probability measure. The Palm distribution induced by $\Phi$ is
\begin{eqnarray}
P_o(Y) = \frac{1}{\lambda v_d(B)}\int\sum_{x\in\Phi\bigcap B}\mathbbm{1}_Y(\Phi_{-x})P(\mathrm{d}\Phi) \label{equ:palm}
\end{eqnarray}
where $B$ is an arbitrary Borel set of positive volume, $v_d$ is the Lebesgue measure and $\Phi_{-x}=\{x_i-x|x_i\in\Phi\}$.

The intensity of a stationary point process is defined as the ratio of the expected number of points in a given area to the Lebesgue measure of that area. Considering the type \Rmnum{1} concurrent transmission process, the retained $\Phi$ is a dependent thinning of the original Poisson p.p. $\Phi_o$. Let $Y$ be the event that the point at the origin $o$ is retained, denoted by $Y=\{\Phi_o\in\mathbb{M}:o\in\Phi\}$.
The Palm probability $P_o(Y)$ is thus the probability that the point at $o$ is retained conditioned upon that there is a point of $\Phi_o$ located at $o$. Therefore the intensity of $\Phi$ is given as follows:
\begin{eqnarray}
\lambda&=&\frac{1}{v_d(B)}\int\sum_{x\in\Phi_o\bigcap B}\mathbbm{1}_{\{\Phi_o:x\in\Phi\}}(\Phi_o)P(\mathrm{d}\Phi_o)\nonumber\\
&=&\frac{1}{v_d(B)}\int\sum_{x\in\Phi_o\bigcap B}\mathbbm{1}_Y((\Phi_o)_{-x})P(\mathrm{d}\Phi_o) =\lambda_pP_o(Y)\nonumber\\
&=&\lambda_pP\big((o,\theta_o)\in\widetilde{\Phi}|(o,\theta_o)\big) = \lambda_pe^{-\lambda_pV_o}. \label{equ:lambda_1}
\end{eqnarray}
From (\ref{equ:lambda_1}), we notice that if $\lambda_p$ is small, $\lambda$ increases with $\lambda_p$, and when $\lambda_p$ is large enough, $\lambda$ starts to decrease. In the extreme case as $\lambda_p\rightarrow\infty$, $\lambda\rightarrow 0$. This shows that the type \Rmnum{1} RTS/CTS thinning is overly conservative removing too many potential transceivers. The maximum of $\lambda$ occurs when we choose $\lambda_p = 1/V_o$.

\subsubsection{Mean Interference}

The mean interference of a typical receiver in type \Rmnum{1} concurrent transmission process is given by the following theorem.
\begin{thm}
\label{thm:inter_1}
The mean interference experienced by a typical receiver in type \Rmnum{1} concurrent transmission process is
\begin{eqnarray}
&&E_{(o,0)}^!(I_{z_o}) = \frac{\lambda_p^2P_t}{2\pi\lambda}\int_0^\infty\!\!\!\int_0^{2\pi}\!\!\!\int_0^{2\pi}\nonumber\\
&&l(\sqrt{r^2-2rd\cos\varphi+d^2})k(r,\varphi,0,\theta)r\mathrm{d}\theta\mathrm{d}\varphi\mathrm{d}r \label{equ:inter_final}
\end{eqnarray}
where $k(r,\varphi,0,\theta)$ is given by
\begin{equation}
k(r,\varphi,0,\theta) = \left\{\begin{array}{ll}
0 & S_1\bigcup S_2\bigcup S_3 \\
\exp(-\lambda_pV(r,\varphi,\theta)) & \mathrm{otherwise}.
\end{array} \right.\label{equ:k_final}
\end{equation}
\end{thm}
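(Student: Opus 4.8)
\emph{Proof approach.} The plan is to pass from the reduced Palm expectation $E_{(o,0)}^!(I_{z_o})$ to the second-order product density of the marked point process $\widetilde{\Phi}$, to evaluate that density explicitly via Slivnyak's theorem and the void probability of the underlying Poisson process, and finally to reduce the resulting four-fold integral to the claimed three-fold one by using the rotational symmetry of the model. Concretely, writing the interference at the receiver $z_o=(d\cos 0,d\sin 0)$ of the typical transmitter as $I_{z_o}=\sum_{(X_j,\theta_j)\in\widetilde{\Phi},\,X_j\neq o}P_t\,l(|X_j-z_o|)$, I would apply the Campbell--Mecke formula to the jointly stationary and isotropic process $\widetilde{\Phi}$: for any Borel set $B$ of positive area,
\[E\Big[\sum_{(X_i,\theta_i)\in\widetilde{\Phi},\,X_i\in B}\ \sum_{(X_j,\theta_j)\in\widetilde{\Phi},\,j\neq i}P_t\,l\big(|X_j-X_i-(d\cos\theta_i,d\sin\theta_i)|\big)\Big]=\lambda\,v_d(B)\,E_{(o,0)}^!(I_{z_o}),\]
the equality $E_{(x,\theta)}^!(\cdot)=E_{(o,0)}^!(\cdot)$ following from spatial stationarity and rotational invariance of the construction. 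Since the left-hand side is a second-order factorial moment, it also equals
\[\int_B\!\int_{\mathbb{R}^2}\!\int_0^{2\pi}\!\int_0^{2\pi}\rho^{(2)}\big((x,\theta_x),(y,\theta_y)\big)\,P_t\,l\big(|y-x-(d\cos\theta_x,d\sin\theta_x)|\big)\,\mathrm{d}\theta_x\,\mathrm{d}\theta_y\,\mathrm{d}y\,\mathrm{d}x,\]
where $\rho^{(2)}$ is the second-order product density of $\widetilde{\Phi}$.

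The crux is the evaluation of $\rho^{(2)}$. The orientation-marked process underlying $\widetilde{\Phi}_o$ is an independently marked Poisson process of intensity $\lambda_p$ with orientation density $1/(2\pi)$, so its product density is $\lambda_p^2/(4\pi^2)$; consequently $\rho^{(2)}((x,\theta_x),(y,\theta_y))$ equals $\lambda_p^2/(4\pi^2)$ times the probability that, given extra points placed at $(x,\theta_x)$ and $(y,\theta_y)$, both survive the type~\Rmnum{1} thinning. By Slivnyak's theorem the remaining points still form a Poisson process of intensity $\lambda_p$. Setting $r=|x-y|$, $\varphi$ the phase angle difference and $\theta=\theta_y-\theta_x$, one of the two pairs contains the other pair's transmitter in its exclusion zone exactly on $S_1\cup S_2\cup S_3$ (this is precisely what the definitions of $S_1,S_2,S_3$ express), in which case at least one of the two points is deleted and the probability is $0$; off this event, both points survive if and only if no point of the Poisson background falls in the union of the two exclusion zones, an event of probability $\exp(-\lambda_p V(r,\varphi,\theta))$ by the void probability, with $V(r,\varphi,\theta)$ the area of that union. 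Hence $\rho^{(2)}((x,\theta_x),(y,\theta_y))=\frac{\lambda_p^2}{4\pi^2}k(r,\varphi,0,\theta)$ with $k$ as in (\ref{equ:k_final}). Substituting this back, using translation invariance to set $x=o$, passing to polar coordinates $(r,\varphi)$ for $y$ measured relative to the direction $\theta_x$ and to $\theta=\theta_y-\theta_x$, and observing that $|y-(d\cos\theta_x,d\sin\theta_x)|=\sqrt{r^2-2rd\cos\varphi+d^2}$ so that the integrand is independent of $\theta_x$, the $\theta_x$-integration contributes a factor $2\pi$; after dividing by $\lambda\,v_d(B)$ this is exactly (\ref{equ:inter_final}).

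The step I expect to be the main obstacle is the product-density computation: verifying that the ``conflict'' event --- one transceiver pair having the other pair's transmitter inside its exclusion zone --- coincides with $S_1\cup S_2\cup S_3$, and that, off this event, joint survival under type~\Rmnum{1} thinning is equivalent to the Poisson background avoiding the union of the two exclusion zones, so that the conditional survival probability is the void probability $\exp(-\lambda_p V(r,\varphi,\theta))$ with $V$ the area of that union. A secondary but necessary point is the bookkeeping of normalizing constants --- the two $1/(2\pi)$ orientation densities, the $2\pi$ gained from the redundant $\theta_x$-integration, and the $1/\lambda$ from the Palm normalization --- together with the justification that $E_{(x,\theta)}^!$ is independent of $(x,\theta)$, which rests on the joint stationarity and isotropy of $\widetilde{\Phi}$.
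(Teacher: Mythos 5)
Your proposal is correct and follows essentially the same route as the paper: both pass from the reduced Palm expectation to the second-order product density of the marked process via the refined Campbell (Campbell--Mecke) theorem, identify that density as $\lambda_p^2/(4\pi^2)$ times the two-point retention probability $k$ via Slivnyak's theorem, and evaluate $k$ as $0$ on the conflict event $S_1\cup S_2\cup S_3$ and as the void probability $\exp(-\lambda_p V(r,\varphi,\theta))$ otherwise. The normalization bookkeeping (the two $1/(2\pi)$ mark densities, the $2\pi$ from the redundant orientation integral, and the $1/\lambda$ Palm factor) matches the paper's constants exactly.
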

\begin{proof}
Without loss of generality, we analyze a typical transceiver pair whose transmitter is located at the origin $o$ and the receiver is located at the orientation $\theta_o$; in other words, we let $(o,\theta_o)\in\widetilde{\Phi}$ and the coordinate of the corresponding receiver is $z_o=\{d\cos\theta_o,d\sin\theta_o\}$. The interference experienced by the receiver located at $z_o$ is given by
\begin{eqnarray}
&&E_{(o,\theta_o)}^!(I_{z_o}) = E_{(o,\theta_o)}^!\Big(\sum_{(x,\theta)\in\widetilde{\Phi}}P_tl(|x-z_o|)\Big) \nonumber \\
&=&\int_\mathbb{M}\Big(\int_{R^2\times[0,2\pi]}P_tl(|x-z_o|)\widetilde{\Phi}(\mathrm{d}(x,\theta))\Big)P_{(o,\theta_o)}^!(\mathrm{d}\widetilde{\Phi}) \nonumber \\
&=&\int_{R^2\times[0,2\pi]}P_tl(|x-z_o|)\int_\mathbb{M}\Big(\widetilde{\Phi}(\mathrm{d}(x,\theta))P_{(o,\theta_o)}^!(\mathrm{d}\widetilde{\Phi})\Big) \nonumber \\
&=&\lambda P_t\int_{R^2\times[0,2\pi]}l(|x-z_o|)\mathcal{K}_{\theta_o}(\mathrm{d}(x,\theta)). \label{equ:mean_I}
\end{eqnarray}
Similar to the case of unmarked p.p., we define $\mathcal{K}_{\theta_o}(B\times L), B\subset R^2, L\subset [0,2\pi]$, as the reduced second-order factorial measure of marked p.p. on $B$. Intuitively, $\lambda\mathcal{K}_{\theta_o}(B\times L)$ is the expected number of points located in $B$ with marks taking values in $L$ under the condition that $(o,\theta_o)\in\widetilde{\Phi}$, and its mathematical description is given by
\begin{eqnarray}
\mathcal{K}_{\theta_o}(B\times L)&=&\frac{1}{\lambda}E_{(o,\theta_o)}^!\Big(\widetilde{\Phi}(B\times L)\Big) \nonumber \\
&=&\frac{1}{\lambda}\int_\mathbb{M}\widetilde{\Phi}(B\times L)P_{(o,\theta_o)}^!(\mathrm{d}\widetilde{\Phi}).
\end{eqnarray}
Next we thus focus on the evaluation of $\mathcal{K}_{\theta_o}(B\times L)$.

The second-order factorial measure of the marked p.p. $\widetilde{\Phi}$ is given by \cite[pp. 114]{stoyanstochastic}
\begin{eqnarray}
&&\alpha^{(2)}(B_1\times B_2\times L_1\times L_2)\nonumber\\
&=&\!\!\!E\Bigg(\sum_{\substack{(x_1,\theta_1)\in\widetilde{\Phi} \\ (x_2,\theta_2)\in\widetilde{\Phi}}} ^{\neq}\mathbbm{1}_{B_1}(x_1)\mathbbm{1}_{B_2}(x_2)\mathbbm{1}_{L_1}(\theta_1)\mathbbm{1}_{L_2}(\theta_2)\Bigg) \nonumber \\
&=&\!\!\!\int_\mathbb{M}\!\!\sum_{(x,\theta)\in\widetilde{\Phi}}\!\!\mathbbm{1}_{B_1}(x)\mathbbm{1}_{L_1}(\theta)\widetilde{\Phi}(B_2\times L_2/\{(x,\theta)\})P(\mathrm{d}\widetilde{\Phi}).\nonumber\\
&& \label{equ:alpha1}
\end{eqnarray}
According to the refined Campbell theorem, we have
\begin{eqnarray}
&&E\Big(\sum_{(x,\theta)\in\widetilde{\Phi}}h(x,\theta,\widetilde{\Phi}/\{x,\theta\})\Big)\nonumber\\
&=&\frac{\lambda}{2\pi}\int_{R^2}\int_\Theta\int_\mathbb{M}h(x,\theta,\widetilde{\Phi})P_{(x,\theta)}^!(\mathrm{d}\widetilde{\Phi})\mathrm{d}\theta\mathrm{d}x\nonumber.
\end{eqnarray}
Letting $h\big(x,\theta,\widetilde{\Phi}\big)=\mathbbm{1}_{B_1}(x)\mathbbm{1}_{L_1}(\theta)\widetilde{\Phi}(B_2\times L_2)$ and plugging into (\ref{equ:alpha1}), we get
\begin{eqnarray}
&&\alpha^{(2)}(B_1\times B_2\times L_1\times L_2) \nonumber \\
&=&\int_\mathbb{M}\sum_{(x,\theta)\in\widetilde{\Phi}}h\big(x,\theta,\widetilde{\Phi}/\{(x,\theta)\}\big)P(\mathrm{d}\widetilde{\Phi}) \nonumber \\
&=&\frac{\lambda}{2\pi}\int_{R^2\times[0,2\pi]}\int_\mathbb{M}h\big(x,\theta,\widetilde{\Phi}\big)P_{(x,\theta)}^!(\mathrm{d}\widetilde{\Phi})\mathrm{d}(x,\theta) \nonumber \\
&=&\frac{\lambda}{2\pi}\int_{R^2\times[0,2\pi]}\int_\mathbb{M}\mathbbm{1}_{B_1}(x)\mathbbm{1}_{L_1}(\theta)\nonumber\\
&&\quad \widetilde{\Phi}((B_2-x)\times L_2) P_{(o,\theta)}^!(\mathrm{d}\widetilde{\Phi})\mathrm{d}(x,\theta) \nonumber \\
&=&\frac{\lambda}{2\pi}\int_{R^2\times[0,2\pi]}\mathbbm{1}_{B_1}(x)\mathbbm{1}_{L_1}(\theta)\nonumber\\
&&\quad \Bigg(\int_\mathbb{M}\widetilde{\Phi}((B_2-x)\times L_2) P_{(o,\theta)}^!(\mathrm{d}\widetilde{\Phi})\Bigg)\mathrm{d}(x,\theta) \nonumber \\
&=&\frac{\lambda^2}{2\pi}\int_{B_1\times L_1}\mathcal{K}_{\theta}((B_2-x)\times L_2)  \mathrm{d}(x,\theta). \label{equ:alpha2}
\end{eqnarray}

Consider the second-order product density $\varrho^{(2)}$ defined as follows \cite[pp. 111]{stoyanstochastic}
\begin{eqnarray}
&&\alpha^{(2)}(B_1\times B_2\times L_1\times L_2) \nonumber \\
&=&\int_{B_1\times B_2\times L_1\times L_2}\varrho^{(2)}(x_1,x_2,\theta_1,\theta_2)\mathrm{d}(x_1,x_2,\theta_1,\theta_2) \nonumber \\
&\stackrel{(a)}{=}&\int_{B_1\times L_1}\Bigg(\int_{B_2\times L_2}\varrho^{(2)}(x_2-x_1,\theta_1,\theta_2)\mathrm{d}(x_2,\theta_2)\Bigg)\mathrm{d}(x_1,\theta_1) \nonumber \\
&=&\int_{B_1\times L_1}\Bigg(\int_{(B_2-x_1)\times L_2}\varrho^{(2)}(x_2,\theta_1,\theta_2)\mathrm{d}(x_2,\theta_2)\Bigg)\mathrm{d}(x_1,\theta_1)\nonumber\\
&&\label{equ:alpha3}
\end{eqnarray}
where $(a)$ follows from the fact that for a stationary p.p., $\varrho^{(2)}(x_1,x_2,\theta_1,\theta_2)$ depends only on $x_2-x_1$, $\theta_1$ and $\theta_2$. By comparing (\ref{equ:alpha2}) and (\ref{equ:alpha3}), we get
\begin{eqnarray}
\mathcal{K}_{\theta_o}(B\times L)&=&\frac{2\pi}{\lambda^2}\int_{B\times L}\varrho^{(2)}(x,\theta_o,\theta)\mathrm{d}(x,\theta),
\end{eqnarray}
whose differential form is
\begin{eqnarray}
\mathcal{K}_{\theta_o}(\mathrm{d}(x,\theta))&=&\frac{2\pi}{\lambda^2}\varrho^{(2)}(x,\theta_o,\theta)\mathrm{d}(x,\theta). \label{equ:derive}
\end{eqnarray}
Plugging (\ref{equ:derive}) into (\ref{equ:mean_I}), we get
\begin{eqnarray}
&&E_{(o,\theta_o)}^!(I_{z_o}) =\nonumber\\
&&\frac{2\pi P_t}{\lambda}\!\!\int_{R^2\times[0,2\pi]} \!\!l(|x-z_o|)\varrho^{(2)}(x,\theta_o,\theta)\mathrm{d}(x,\theta). \label{equ:mean_I_2}
\end{eqnarray}

At this point, for convenience and without loss of generality, we assume $\theta_o=0$; that is, the considered typical receiver is located at coordinate $z_o=(d,0)$. The interference experienced at $z_o$ is given by
\begin{equation}
E_{(o,0)}^!(I_{z_o}) = \frac{\lambda_p^2P_t}{2\pi\lambda}\int_{R^2\times[0,2\pi]}l(|x-z_o|)k(x,0,\theta)\mathrm{d}(x,\theta),
\end{equation}
which, when written in polar form, is
\begin{eqnarray}
&&E_{(o,0)}^!(I_{z_o}) = \frac{\lambda_p^2P_t}{2\pi\lambda}\int_0^\infty\!\!\!\int_0^{2\pi}\!\!\!\int_0^{2\pi} \nonumber\\
&&l(\sqrt{r^2-2rd\cos\varphi+d^2})k(r,\varphi,0,\theta)r\mathrm{d}\theta\mathrm{d}\varphi\mathrm{d}r. \label{equ:inter_final_proof}
\end{eqnarray}
The function $k(r,\varphi,0,\theta)$ denotes the probability that two transmitters of distance $r$ apart and with phase angle difference $\varphi$, the orientation marks of which are $0$ and $\theta$, respectively, are both retained (see Figure \ref{fig:twopair}). The value of $k(r,\varphi,0,\theta)$ is then obtained from their geometric relationship and is given by (\ref{equ:k_final}): when any of $S_1$, $S_2$ and $S_3$ occurs, the two transceiver pairs are within their guard zone of the RTS/CTS mechanism and thus neither of them is retained; otherwise, these two pairs are retained if and only if there is no other potential transmitter within the area of $V(r, \varphi, \theta)$, which occurs with probability $\exp(-\lambda_p V(r, \varphi, \theta))$ since $\Phi_o$ is a homogenous Poisson p.p. with intensity $\lambda_p$.
\end{proof}

\subsection{Type \Rmnum{2} Concurrent Transmission Process}
\subsubsection{Nodes Intensity}
In type \Rmnum{2} concurrent transmission process, a transceiver pair with time stamp mark $t$ leads to the removal of those transmitters lying in its induced guard zone and having time stamp mark larger than $t$. Conditioned upon $t$, the transceiver pairs with time stamp mark larger than $t$ is an independent thinning of the original Poisson p.p. $\Phi_o$. According to (\ref{equ:lambda_1}), given $t$, the probability of retaining a given transceiver pair with time stamp mark $t$ is $e^{-\lambda_ptV_o}$. By averaging over $t$, we obtain the intensity of type \Rmnum{2} concurrent transmission process:
\begin{eqnarray}
\lambda&=&\lambda_pP\big((o,\theta_o)\in\widetilde{\Phi}|(o,\theta_o)\big)\nonumber\\
&=&\lambda_p\int_0^1P\big((o,\theta_o)\in\widetilde{\Phi}|(o,\theta_o,m_o)\in\widetilde{\Phi}_o,m_o=t\big)\mathrm{d}t\nonumber\\
&=&\lambda_p\int_0^1e^{-\lambda_ptV_o}\mathrm{d}t = \frac{1}{V_o}(1-e^{-\lambda_pV_o}) \label{equ:lambda_2}
\end{eqnarray}
where $V_o$ is given by (\ref{equ:V_o}). From (\ref{equ:lambda_2}) we notice that $\lambda$ monotonically increasing with $\lambda_p$, and as $\lambda_p\rightarrow \infty$, $\lambda$ tends toward a constant limit $1/V_o$.

\subsubsection{Mean Interference}

In parallel with Theorem \ref{thm:inter_1}, the following theorem gives the mean interference for type \Rmnum{2} concurrent transmission process.
\begin{thm}
\label{thm:inter_2}
The mean interference experienced by a typical receiver in type \Rmnum{2} concurrent transmission process is
\begin{eqnarray}
&&E_{(o,0)}^!(I_{z_o}) = \frac{\lambda_p^2P_t}{2\pi\lambda}\int_0^\infty\!\!\!\int_0^{2\pi}\!\!\!\int_0^{2\pi}\nonumber\\
&&l(\sqrt{r^2-2rd\cos\varphi+d^2})k(r,\varphi,0,\theta)r\mathrm{d}\theta\mathrm{d}\varphi\mathrm{d}r \label{equ:inter_final2}
\end{eqnarray}
where $k(r,\varphi,0,\theta)$ is given by
\begin{equation}
k(r,\varphi,0,\theta) = \left\{\begin{array}{ll}
0 & S_1\bigcup (S_2\bigcap S_3) \\
2\beta(V) & \overline{S_1}\bigcap\overline{S_2}\bigcap\overline{S_3}  \\
\beta(V) & \mathrm{otherwise}.
\end{array} \right.\label{equ:k_final2}
\end{equation}
in which $V$ is the abbreviation of $V(r,\varphi,\theta)$, and $\beta(V)$ is given by
\begin{eqnarray}
\beta(V) = \frac{V_oe^{-\lambda_pV}-Ve^{-\lambda_pV_o}+V-V_o}{\lambda_p^2(V-V_o)VV_o}.
\end{eqnarray}
\end{thm}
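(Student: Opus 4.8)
The plan is to retrace the proof of Theorem~\ref{thm:inter_1} as far as it is generic, and to redo only the one ingredient that is sensitive to the thinning rule. The reduction of the reduced Palm expectation $E_{(o,0)}^!(I_{z_o})$ to the integral representation (\ref{equ:mean_I_2}) uses nothing but the refined Campbell theorem and the interplay between the reduced Palm measure, the second-order factorial measure and the second-order product density of a stationary marked point process; since the retained process $\widetilde{\Phi}$ in type~\Rmnum{2} carries the same marks (location and orientation only) as in type~\Rmnum{1}, (\ref{equ:mean_I_2}) holds without change. As before we then write $\varrho^{(2)}(x,0,\theta)=\frac{\lambda_p^2}{(2\pi)^2}k(r,\varphi,0,\theta)$, where $\frac{\lambda_p^2}{(2\pi)^2}$ is the pair density of the i.m.\ Poisson process $\widetilde{\Phi}_o$ (two points, two independent uniform orientation marks) and $k(r,\varphi,0,\theta)$ denotes the probability that two potential transmitters at distance $r$, relative phase $\varphi$, with receiver orientations $0$ and $\theta$, are \emph{both} retained by the type~\Rmnum{2} thinning, averaged over their i.i.d.\ uniform time stamps $m_1,m_2$ and over the remaining points of $\Phi_o$. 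Plugging this into (\ref{equ:mean_I_2}) and passing to polar coordinates yields (\ref{equ:inter_final2}), so everything comes down to evaluating $k$.

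Write $\mathcal{E}_i=B_{X_i}(R_{\mathrm{cs}})\cup B_{z_i}(R_{\mathrm{tx}})$ for the exclusion zone of tagged pair $i$, so $|\mathcal{E}_1|=|\mathcal{E}_2|=V_o$ and $|\mathcal{E}_1\cup\mathcal{E}_2|=V(r,\varphi,\theta)$. Because the medium-access indicators are read off directly from $\widetilde{\Phi}_o$, transmitter~$1$ is retained iff every point of $\Phi_o$ inside $\mathcal{E}_1$ carries a time stamp larger than $m_1$, and likewise for transmitter~$2$ with $\mathcal{E}_2$ and $m_2$; in particular retention of one tagged transmitter is unaffected by whether the points inside its exclusion zone are themselves retained. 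Geometrically, $X_2\in\mathcal{E}_1$ exactly on $S_1\cup S_2$ and $X_1\in\mathcal{E}_2$ exactly on $S_1\cup S_3$. Hence on $S_1\cup(S_2\cap S_3)=(S_1\cup S_2)\cap(S_1\cup S_3)$ each tagged transmitter lies in the other's exclusion zone, so the one with the larger time stamp is deleted and $k=0$, which is the first line of (\ref{equ:k_final2}).

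On the complementary events, condition on $m_1=s$ and $m_2=t$ and split $\mathcal{E}_1\cup\mathcal{E}_2$ into the disjoint pieces $\mathcal{E}_1\setminus\mathcal{E}_2$, $\mathcal{E}_2\setminus\mathcal{E}_1$ (each of area $V-V_o$) and $\mathcal{E}_1\cap\mathcal{E}_2$ (of area $2V_o-V$, by inclusion–exclusion). A point of $\Phi_o$ in $\mathcal{E}_1\setminus\mathcal{E}_2$ destroys retention iff its mark is below $s$, one in $\mathcal{E}_2\setminus\mathcal{E}_1$ iff below $t$, and one in $\mathcal{E}_1\cap\mathcal{E}_2$ iff below $\max(s,t)$; by the independence of the i.m.\ Poisson process over disjoint regions and the uniform mark law, the conditional probability of no destroying point is $\exp\!\big(-\lambda_p[(V-V_o)(s+t)+(2V_o-V)\max(s,t)]\big)$. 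On $\overline{S_1}\cap\overline{S_2}\cap\overline{S_3}$ neither tagged transmitter sits in the other's zone, so $(s,t)$ ranges freely and $k$ is the integral of this over $[0,1]^2$; on the remaining event (exactly one of $S_2,S_3$, say $S_2$) transmitter~$2$ lies in $\mathcal{E}_1$, so retention of transmitter~$1$ additionally demands $m_1<m_2$, and $k$ is the same integrand over $\{0\le s\le t\le 1\}$ (the $S_3$-only case being symmetric). Splitting on $\max(s,t)$ and using $(V-V_o)t+(2V_o-V)t=V_ot$, both cases reduce to $\int_0^1\!\!\int_0^t\exp(-\lambda_p[(V-V_o)s+V_ot])\,\mathrm{d}s\,\mathrm{d}t=\beta(V)$; so $k=2\beta(V)$ and $k=\beta(V)$ respectively, matching (\ref{equ:k_final2}).

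The one genuinely new difficulty compared with Theorem~\ref{thm:inter_1} lies in this last step: decomposing $\mathcal{E}_1\cup\mathcal{E}_2$ into the three regions, attaching the correct time-stamp threshold to each (the $\max(s,t)$ on the overlap is the delicate one), and — the key observation — noticing that a one-sided geometric overlap (exactly one of $S_2,S_3$) forces the chronological constraint $m_1<m_2$, which restricts the mark integral to half of the unit square and produces the factor $2$ separating the second and third lines of (\ref{equ:k_final2}). The remaining double integrals over $[0,1]^2$ are elementary. It is also worth noting that $V=V_o$ occurs only on the null set $\{r=0,\ \theta=0\}$, where $\beta$ is extended by continuity, so it has no bearing on (\ref{equ:inter_final2}).
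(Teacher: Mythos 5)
Your proposal is correct and takes essentially the same route as the paper: reuse the type-\Rmnum{1} reduction to the second-order product density, then compute $k$ as the joint retention probability by cases on $S_1,S_2,S_3$, arriving at the same ordered double integral over the time stamps (your exponent $(V-V_o)s+V_ot$ on $\{s<t\}$ is exactly the paper's $e^{-\lambda_p t_1 V_o}e^{-\lambda_p t_2(V-V_o)}$ factorization). Your disjoint decomposition of $\mathcal{E}_1\cup\mathcal{E}_2$ with the $\max(s,t)$ threshold on the overlap, and the observation that a one-sided overlap imposes $m_1<m_2$ and halves the integration domain, simply make explicit the justification that the paper states only verbally.
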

\begin{proof}
The derivation of (\ref{equ:inter_final2}) follows the same line as that of the type \Rmnum{1} process.
However, the value of $k(r,\varphi,0,\theta)$, which is the probability that two transceiver pairs are both retained, is different from the type \Rmnum{1} process since the time stamp marks are taken into consideration. We should consider not only the geometric relationship (see Figure \ref{fig:twopair}), but also the relationship between the time stamp marks of the two transceiver pairs. When $S_1\bigcup (S_2\bigcap S_3)$ occurs, at least one of the considered two transceiver pairs has to be removed after comparing their time stamp marks and thus $k(r,\varphi,0,\theta)$ is zero. When $\overline{S_1}\bigcap\overline{S_2}\bigcap\overline{S_3}$, there is no direct comparison between the time stamp marks of the considered two transceiver pairs. Let their time stamp marks be $t_1$ and $t_2$ respectively. So by separately considering the cases of $t_1 \geq t_2$ and $t_1 < t_2$, we can evaluate the probability that both of the transceiver pairs are retained, namely, $k(r,\varphi,0,\theta)$, as
\begin{eqnarray}
\label{eqn:k-two-conditions}
&&k(r,\varphi,0,\theta) = \int_0^1 e^{-\lambda_p t_1 V_o} \left[\int_0^{t_1} e^{-\lambda_p t_2 (V - V_o)} \mathrm{d}t_2\right] \mathrm{d}t_1 \nonumber\\
&&+ \int_0^1 e^{-\lambda_p t_2 V_o} \left[\int_0^{t_2} e^{-\lambda_p t_1 (V - V_o)} \mathrm{d}t_1\right] \mathrm{d}t_2\nonumber\\
&=& 2 \int_0^1 e^{-\lambda_p t_1 V_o} \left[\int_0^{t_1} e^{-\lambda_p t_2 (V - V_o)} \mathrm{d}t_2\right] \mathrm{d}t_1,
\end{eqnarray}
which can be evaluated as twice of $\beta(V)$. Here, $\int_0^{t_1} e^{-\lambda_p t_2 (V - V_o)} \mathrm{d}t_2$ is the conditional probability that, when the first transceiver pair is marked by $t_1$, there is no other transceiver pair of mark smaller than $t_2 < t_1$ lying in the area covered by the second transceiver pair only; $\int_0^{t_2} e^{-\lambda_p t_1 (V - V_o)} \mathrm{d}t_1$ is analogously interpreted. When either $\overline{S_1}\bigcap{S_2}\bigcap\overline{S_3}$ or $\overline{S_1}\bigcap\overline{S_2}\bigcap{S_3}$ occurs, exactly one of the considered transmitter is within the RTS/CTS cleared region of the other, and hence only one of the terms in (\ref{eqn:k-two-conditions}) should be taken into account, resulting into $k(r,\varphi,0,\theta) = \beta(V)$.
\end{proof}

\section{Conclusion}
\label{sec:conclusion}

In this paper, we proposed marked point process models to characterize the spatial distribution of transceivers under the RTS/CTS handshake mechanism in WLAN, focusing on the evaluation of the mean interference experienced by a typical receiver. Our analysis revealed how the mean interference under RTS/CTS varies with system parameters such as the carrier ranges and the density of transceiver nodes. The method employed in our work may be applicable to more general wireless networks modeled by more sophisticated kinds of point processes.

\bibliographystyle{IEEEtran}
\bibliography{csma-isit}

\end{document}